\documentclass[a4paper,reqno]{amsart}
\usepackage{amsmath}
\usepackage{amssymb}
\usepackage{accents} 
\usepackage{todonotes}
\usepackage{enumitem}

\newtheorem{theorem}{Theorem}[section]

\newtheorem{lemma}[theorem]{Lemma}
\newtheorem{remark}[theorem]{Remark}
\newtheorem{example}[theorem]{Example}

\newcommand{\wt}{\widetilde}
\newcommand{\wh}{\widehat}
\newcommand{\ol}{\overline}
\numberwithin{equation}{section}
\newcommand{\orcidauthorNJ}{0000-0001-7504-4444}
\newcommand{\bX}{\boldsymbol{X}}
\newcommand{\bY}{\boldsymbol{Y}} 
\allowdisplaybreaks

\title{
QRT map on a bielliptic surface 
}
\author{Nalini Joshi}
\address{School of Mathematics and Statistics F07, The\,University\,of\, Sydney, NSW 2006, Australia.}
\email{nalini.joshi@sydney.edu.au}
\thanks{NJ's ORCID is \orcidauthorNJ}
\author{Frank W. Nijhoff}
\address{School of Mathematics, University of Leeds, Leeds LS2 9JT,
UK}
\email{frank.nijhoff@gmail.com}
\author{Allan Steel}
\address{School of Mathematics and Statistics F07, The\,University\,of\,Sydney, NSW 2006, Australia.}
\email{allan.steel@sydney.edu.au}
\subjclass[2020]{
  37J70, 
  14H70  
}

\begin{document}
\begin{abstract}
    The family of mappings of the plane possessing a biquadratic invariant, which is known collectively as QRT maps, is composed of two involutions, one preserving a vertical shift and the other preserving a horizontal shift in the plane. In this paper, we extend the map by replacing each shift by the group operation on each of two  families of elliptic curves, whose product forms a bielliptic surface.
\end{abstract}
\maketitle

\section{Introduction}

The Quispel-Roberts-Thompson (QRT) map is an 8-parameter family of 
mappings of the plane preserving a biquadratic invariant \cite{QRT}, cf. also \cite{HJN}. It is the composition of two involutions, which can be expressed as the addition formula on a rational elliptic surface \cite{Tsuda, Duist}.  The QRT map has played an important role in the development of the theory of 
discrete integrable systems, notably in the discovery of discrete 
Painlev\'e equations, see e.g., \cite[Chapter 11]{HJN} and  \cite{Joshi2019}. In this paper, we generalise the QRT map to bielliptic surfaces\footnote{Bielliptic surfaces are also known as hyperelliptic surfaces in earlier literature.}.

Let $\Gamma$ and $\Gamma^*$ denote two elliptic curves with respective uniformising variables $\xi$ and $\eta$, given by 
\begin{align}
 \label{eq:Weier}  
 \Gamma\ :\quad    X^2&=4x^3-g_2 x-g_3\ \, =4(x-e_1)(x-e_2)(x-e_3),\\
\label{eq:Weier2}
\Gamma^\ast:\quad   Y^2&=4y^3-G_2 y-G_3=4(y-E_1)(y-E_2)(y-E_3),
\end{align} 
with branch points given by coordinates $e_1,e_2,e_3$ and 
$E_1,E_2,E_3$ respectively, 
where we assume that, in general, the curves $\Gamma$ and $\Gamma^\ast$ are non-degenerate, i.e., 
\begin{align*}
    &27 g_3^2 \not= g_2^3,\quad 27 G_3^2 \not= G_2^3.
\end{align*} 
Furthermore, we do not impose any conditions on how the two curves $\Gamma$ and $\Gamma^\ast$ are related to each other; we leave the moduli 
$g_2,g_3, G_2,G_3$ undetermined (including their possible dependence on 
the variables of the other curve). 

In this paper, we define two involutions $\iota_j$, $j=1, 2$, one acting on $\Gamma$ and the other on $\Gamma^\ast$, by the group action on each respective curve. We will consider these involutions 
as a dynamical system on a bielliptic surface (see section \ref{s:geom} for the standard definition of a bielliptic surface). For this reason, we will call the composition of the two 
involutions the \textit{bielliptic QRT map}. In remark \ref{rem:QRT} we show how the invariant of the bielliptic QRT map degenerates to that of the standard QRT map in a singular limit of the elliptic curves.  

The involutions $\iota_j$ are introduced in section \ref{s:eqrt}, where we also show that 
their composition is symplectic. Our focus lies primarily on the case when $\iota_1$ acts on $\Gamma$ while $\Gamma^\ast$ is preserved and $\iota_2$ acts on $\Gamma^\ast$ while $\Gamma$ is preserved. Under the condition that $g_2$, $g_3$ are independent of $y$ and $G_2$, $G_3$ are independent of $y$, we provide the generic 10-parameter family of all such maps in Section \ref{s:consmoduli}, along with an explicit example.

\section{bielliptic QRT} \label{s:eqrt}

In this section, we construct the bielliptic QRT map, show that it is symplectic and briefly describe how it reduces to the standard QRT map in the case when the elliptic curves $\Gamma$ and $\Gamma^*$ become degenerate. 

Define an involution on the curve $\Gamma$ by imposing the determinantal equation
\begin{align}\label{eq:3det}
 \left| \begin{array}{ccc}
       1  & x & X  \\
       1  & \wt{x} & \wt{X} \\ 
       \phi_1 & \phi_2 & \phi_3 
    \end{array}\right| 
     = (\wt{x}-x)\phi_3 -(\wt{X}-X)\phi_2 +(x\wt{X}-\wt{x}X)\phi_1 = 0\ , \end{align} 
where we assume that the vector $\boldsymbol{\phi}=(\phi_1,\phi_2,\phi_3)$ is a constant (i.e., 
independent of $(x,X)$). The involution is given by the map 
\begin{equation}\label{eq:i_1}
\iota_1:\quad (x,X) \mapsto (\wt{x},\wt{X}),
\end{equation}
where $(x,X)$ denotes points on the Weierstrass curve $\Gamma$. For given 
$\boldsymbol{\phi}=(\phi_1,\phi_2,\phi_3)$, we can solve equation \eqref{eq:3det} for 
$\wt{x}$, and substituting the result (in terms of $x$, $X$ and $\wt{X}$) into the curve relation for $\Gamma$ we 
obtain a cubic for $\wt{X}$, which has three solutions, one of which is $\wt{X}=X$ (the trivial map) 
and the other is $\wt{X}=\phi_3/\phi_1$, provided $(\phi_2/\phi_1,\phi_3/\phi_1)$ lies on the 
elliptic curve $\Gamma$ as well.  The condition 
$(\phi_2/\phi_1,\phi_3/\phi_1)\in\Gamma$ holds without loss of generality because 
neither the moduli $g_2$ and $g_3$ nor the corresponding uniformising variable 
are specified. The remaining solution of the cubic is the one we are 
interested in, and, as shown in the following lemma, it defines a single-valued involution $\iota_1$. 
\begin{lemma}\label{lem:xinvolution}
The map defined by \eqref{eq:3det} together with \eqref{eq:Weier} is an involution, i.e., $\wt{\wt{\boldsymbol{X}}}=\boldsymbol{X}$. 
\end{lemma}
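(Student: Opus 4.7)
The plan is to exploit the manifest antisymmetry of the determinantal equation \eqref{eq:3det} under the swap $(x,X)\leftrightarrow(\tilde x,\tilde X)$. Interchanging the first two rows of the $3\times 3$ matrix flips the sign of the determinant but leaves the vanishing condition intact, so the defining relation for $\iota_1$ is invariant under that swap. This is the structural feature that should make $\iota_1$ an involution, and the job is simply to check that when one applies the prescription of $\iota_1$ a second time, the ``remaining'' cubic root is precisely the original point $(x,X)$ rather than one of the two degenerate roots.

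Concretely, I would proceed as follows. Starting from $(\tilde x,\tilde X)\in\Gamma$, the image $(\tilde{\tilde x},\tilde{\tilde X})=\iota_1(\tilde x,\tilde X)$ is obtained by solving the determinantal equation, now read as linear in $\tilde{\tilde x}$, substituting into the Weierstrass relation \eqref{eq:Weier}, and isolating the non-trivial root of the resulting cubic in $\tilde{\tilde X}$. As recalled just before the lemma, that cubic always has the two ``spurious'' roots $\tilde{\tilde X}=\tilde X$ (coming from the identically zero first two rows of the matrix) and $\tilde{\tilde X}=\phi_3/\phi_1$ (coming from collinearity with the third row). The antisymmetry of \eqref{eq:3det} together with $(x,X)\in\Gamma$ shows that $(\tilde{\tilde x},\tilde{\tilde X})=(x,X)$ is also a solution of the coupled system, so $X$ is forced to be one of the three roots of the same cubic.

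It then remains to rule out $X=\tilde X$ and $X=\phi_3/\phi_1$, which hold only on the trivial (identity) branch and on the exceptional locus where $(x,X)$ itself coincides with $(\phi_2/\phi_1,\phi_3/\phi_1)$; on the generic Zariski-open stratum on which $\iota_1$ is defined, $X$ must therefore be the third, non-trivial root, i.e., $\tilde{\tilde X}=X$. Once $\tilde{\tilde X}=X$ is established, the determinantal equation \eqref{eq:3det} becomes linear in $\tilde{\tilde x}$ and has the unique solution $\tilde{\tilde x}=x$, completing $\iota_1\circ\iota_1=\mathrm{id}$. The only delicate point I anticipate is the bookkeeping needed to verify that the selection of roots is consistent — i.e., that the ``remaining'' root chosen in the second application of $\iota_1$ really corresponds to the pair $(x,X)$ and not to an accidental coincidence with one of the degenerate branches; this is handled by the generic-position argument above, with the special configurations treated by continuity.
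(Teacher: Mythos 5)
Your proof is correct, but it takes a genuinely different route from the paper's. You argue algebraically: the vanishing of the determinant in \eqref{eq:3det} is symmetric under the row swap $(x,X)\leftrightarrow(\wt{x},\wt{X})$, so $(x,X)$ is one of the three solutions of the coupled system defining $\iota_1(\wt{x},\wt{X})$, and after discarding the two spurious roots $\wt{X}$ and $\phi_3/\phi_1$ on a generic stratum, the ``remaining root'' prescription must return $(x,X)$. This is essentially the classical ``a line meets a cubic in three points'' argument, and its virtue is that it is purely algebraic and needs no transcendental parametrisation. The paper instead notes that $\wt{\wt{x}}=x,\ \wt{\wt{X}}=X$ solves the twice-applied determinantal relation but is not the only solution, and then resolves the branch ambiguity by uniformising: setting $(x,X)=(\wp(\xi),\wp'(\xi))$ and $(\phi_2/\phi_1,\phi_3/\phi_1)=(\wp(\delta),\wp'(\delta))$ turns \eqref{eq:3det} into the Frobenius--Stickelberger determinant \eqref{eq:Frob}, whose $\sigma$-function factorisation cleanly separates the trivial branch $\wt{\xi}=\xi$ from the nontrivial one $\wt{\xi}+\xi+\delta=0$, whence $\wt{\wt{\xi}}=\xi$. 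The paper's route buys something you do not get: the explicit realisation $\iota_1:\xi\mapsto-\xi-\delta$ on the uniformising variable, which is reused later (in the Jacobian computation establishing that $\mu$ preserves $\Omega$). Conversely, your route avoids the analytic machinery and handles the branch selection by root-counting; its only soft spot is the appeal to genericity and continuity to dismiss the loci where $X$ coincides with $\wt{X}$ or $\phi_3/\phi_1$ (note in particular that $X=\wt{X}$ does not by itself force $(x,X)=(\wt{x},\wt{X})$ on a Weierstrass curve, so the exceptional locus is slightly larger than the ``identity branch''), but the paper's own proof operates under comparable genericity assumptions, so this is not a substantive gap.
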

\begin{proof}
To prove the statement, apply the map $\iota_1$ to the determinant on the left side of Equation \eqref{eq:3det}. Then, in addition to \eqref{eq:3det}, we obtain 
\begin{align}
    \left| \begin{array}{ccc}
       1  & \wt{\wt{x}} & \wt{\wt{X}}  \\
       1  & \wt{x} & \wt{X} \\ 
       \phi_1 & \phi_2 & \phi_3 
    \end{array}\right|=0 \ , 
\end{align}
given that the components of $\boldsymbol{\phi}$ are constant. Subtracting \eqref{eq:3det} from this determinant, we see that a solution is given by $\wt{\wt{x}}=x,~ \wt{\wt{X}}=X$. However, there are other solutions. So, assuming that $\phi_1$ does not identically vanish, and parametrising the curve in terms of Weierstrass elliptic functions, i.e., setting $(x,X)=(\wp(\xi),\wp'(\xi))$ for some dynamical variable $\xi$, and given (by the argument presented above) that $\phi_1,\phi_2,\phi_3$ can be parametrised such that $(\frac{\phi_2}{\phi_1},\frac{\phi_3}{\phi_1})=(\wp(\delta),\wp'(\delta))$, we obtain from \eqref{eq:3det} the Frobenius-Stickelberger determinant, cf. \cite{FS},
\begin{align}
      \left| \begin{array}{ccc}
       1  & \wp(\xi) & \wp'(\xi) \\
       1  & \wp(\wt{\xi}) & \wp'(\wt{\xi}) \\ 
       1 & \wp(\delta) & \wp'(\delta) 
    \end{array}\right|= 2\frac{\sigma(\xi+\wt{\xi}+\delta)}{\sigma^3(\xi)\,\sigma^3(\wt{\xi})\,
    \sigma^3(\delta)}\,\sigma(\wt{\xi}-\xi)\,\sigma(\wt{\xi}-\delta)\,\sigma(\xi-\delta)\ , \label{eq:Frob} 
\end{align}
and this vanishes if either $\wt{\xi}=\xi$ (modulo the period lattice of the Weierstrass $\sigma$-function), which is a trivial solution, or if $\wt{\xi}+\xi+\delta=0$ (modulo the period lattice).  But the latter implies $\wt{\wt{\xi}}=\xi$, and hence the map $\boldsymbol{X}\rightarrow \wt{\boldsymbol{X}}$ is an involution. 
\end{proof}
\begin{remark}
Note that the proof of the 
lemma implies that 
\begin{equation}\label{eq:fcubic}
\left(\dfrac{\phi_3}{\phi_1}\right)^2=4\left(\dfrac{\phi_2}{\phi_1}\right)^3-g_2 \left(\dfrac{\phi_2}{\phi_1}\right)-g_3.
\end{equation}
\end{remark}

Inspired by the QRT construction we now proceed by constructing an invariant for the map $\iota_1$. 
Introducing the vector $\boldsymbol{X}=(1,x,X)^T$, we can write \eqref{eq:3det} as 
\begin{align}\label{eq:phiXX}
    \boldsymbol{\phi}\cdot (\boldsymbol{X}\times \wt{\boldsymbol{X}})=0\ . 
\end{align}  
Assuming that $\boldsymbol{\phi}$ depends on another set of variables $y$ and $Y$ such that 
the vector $\boldsymbol{Y}=(1,y,Y)^T$ is constant under the map: $\boldsymbol{Y}\rightarrow 
\wt{\boldsymbol{Y}}=\boldsymbol{Y}$, and taking $\boldsymbol{\phi}$ of the form 
\begin{align}\label{eq:f} 
\boldsymbol{\phi}=\boldsymbol{A}_0\boldsymbol{Y}\times \boldsymbol{A}_1\boldsymbol{Y}\ , 
\end{align}  
we find that the ratio 
\begin{align}\label{eq:inv} 
I=\frac{\boldsymbol{X}^T\boldsymbol{A}_0\boldsymbol{Y}}{\boldsymbol{X}^T\boldsymbol{A}_1\boldsymbol{Y}}
\end{align}  
is an invariant of the map $(x,X;y,Y)\rightarrow(\wt{x},\wt{X};\wt{y},\wt{Y})$, i.e. $\wt{I}=I$. This follows directly from inserting equation \eqref{eq:f} into \eqref{eq:phiXX} leading to 
\begin{align*} 
(\boldsymbol{A}_0\boldsymbol{Y}\times \boldsymbol{A}_1\boldsymbol{Y})\cdot (\boldsymbol{X}\times \wt{\boldsymbol{X}})= 
(\boldsymbol{X}^T\boldsymbol{A}_0\boldsymbol{Y})
(\wt{\boldsymbol{X}}^T\boldsymbol{A}_1\boldsymbol{Y}) 
- (\boldsymbol{X}^T\boldsymbol{A}_1\boldsymbol{Y})
(\wt{\boldsymbol{X}}^T\boldsymbol{A}_0\boldsymbol{Y})=0 ,  
\end{align*} 
and noting that $\wt{\boldsymbol{Y}}=\boldsymbol{Y}$. 
Importantly, we can allow for the moduli of the curve $\Gamma$ to depend on the variables $y$ and $Y$, 
i.e., $g_2=g_2(\boldsymbol{Y})$ and $g_3=g_3(\boldsymbol{Y})$. 

Since the map $\iota_1$ is an involution, i.e., a map of order 2, we need to extend it to create a map 
of infinite order. This is done by introducing a second involution $\iota_2$, which does not 
commute with $\iota_1$, namely 
\begin{equation}\label{eq:i2}
\iota_2:\quad (y,Y) \mapsto (\wh{y},\wh{Y}),
\end{equation}
which leaves the variables $x,X$ invariant, $\wh{\boldsymbol{X}}=\boldsymbol{X}$, and such that 
\begin{align}\label{eq:3dett}
& \boldsymbol{\psi}\cdot (\boldsymbol{Y}\times \wh{\boldsymbol{Y}}) \nonumber \\ 
& = \left| \begin{array}{ccc}
       1  & y & Y  \\
       1  & \wh{y} & \wh{Y} \\ 
       \psi_1 & \psi_2 & \psi_3 
    \end{array}\right| \\
    & = (\wh{y}-y)\psi_3 -(\wh{Y}-Y)\psi_2 +(y\wh{Y}-\wh{y}Y)\psi_1 = 0\ . \nonumber 
\end{align}
This map shares the same invariant $I$ of \eqref{eq:inv} provided we take the vector 
$\boldsymbol{\psi}=(\psi_1,\psi_2,\psi_3)$ of the form 
\begin{align}\label{eq:g} 
\boldsymbol{\psi}=\boldsymbol{A}_0^T\boldsymbol{X}\times \boldsymbol{A}^T_1\boldsymbol{X}\ , 
\end{align}  
and where we assume that $(y,Y)$ are points of the second Weierstrass curve $\Gamma^\ast$, 
with moduli $G_2=G_2(\boldsymbol{X})$ and $G_3=G_3(\boldsymbol{X})$ which 
may depend on the variables $x,X$. Similarly to equation \eqref{eq:fcubic}, the functions 
$\psi_i(X)$ are subject to the curve relation on $\Gamma^\ast$, namely
\begin{equation}\label{eq:gcubic}
\left(\dfrac{\psi_3}{\psi_1}\right)^2=4\left(\dfrac{\psi_2}{\psi_1}\right)^3-G_2 \left(\dfrac{\psi_2}{\psi_1}\right)-G_3.
\end{equation}

The composition of the two non-commuting involutions $\iota_1: (x,X;y,Y)\to (\wt{x},\wt{X};\wt{y},\wt{Y})$ and 
$\iota_2: (x,X;y,Y)\to (\wh{x},\wh{X};\wh{y},\wh{Y})$, viewed as maps on the space of four variables 
$(x,X;y,Y)$, leads to 
a map $\mu=\iota_2\circ\iota_1$. More explicitly we have the chain 
\begin{align}
     (x,X;y,Y)~\stackrel{\iota_1}{\to}~ (\wt{x},\wt{X};y=\wt{y},Y=\wt{Y})~\stackrel{\iota_2}{\to}~
      (\wt{x}=\wh{\wt{x}},\wt{X}=\wh{\wt{X}};\wh{\wt{y}},\wh{\wt{Y}})
\end{align}
resulting in a combined map 
\begin{subequations}\label{eq:ellQRT} 
\begin{align}\label{eq:mu} 
    \mu: ~  (x,X;y,Y)~\to~ (\overline{x},\overline{X};\overline{y},\overline{Y})
\end{align}
given by
\begin{align}
   &  \left(\tfrac{1}{2}\phi_1(\ol{x}+x)-\phi_2\right)\,\frac{\ol{X}-X}{\ol{x}-x}=\tfrac{1}{2}\phi_1(\ol{X}+X)-\phi_3\ , \\ 
    &  \left(\tfrac{1}{2}\ol{\psi}_1(\ol{y}+y)-\ol{\psi}_2\right)\,\frac{\ol{Y}-Y}{\ol{y}-y}=\tfrac{1}{2}\ol{\psi}_1(\ol{Y}+Y)-\ol{\psi}_3\ , 
\end{align}
subject to 
\begin{align}\label{eq:curves}
X^2=4x^3-g_2(Y)x-g_3(Y)\ , \quad Y^2=4y^3- G_2(X) y-G_3(X) , 
\end{align} 
and where $\boldsymbol{\phi}=\boldsymbol{\phi}(\boldsymbol{Y})$ and $\ol{\boldsymbol{\psi}}=\boldsymbol{\psi}(\ol{\boldsymbol{X}})$ are given by \eqref{eq:f} and \eqref{eq:g} respectively.
\end{subequations}

The map $\mu$ defined by \eqref{eq:mu} is what we will refer to as the bielliptic QRT map. 
This map is symplectic through the following argument (in contrast to the QRT map which in general 
position is only measure-preserving, cf. \cite{Roberts-thesis}, also see \cite[\S 6.3.3]{HJN}). 
In fact, in terms of the uniformising variables on the curves $\Gamma$ and $\Gamma^\ast$,
the proof of measure-preservation is straightforward. Recalling that the proof of Lemma \ref{lem:xinvolution} gives the realisation of the 
involution $\iota_1: (\xi,\eta)\to (\wt{\xi},\wt{\eta})=(-\xi-\delta,\eta)$, we have 
\[ 
J_1=\frac{\partial(\wt{\xi},\wt{\eta})}{\partial(\xi,\eta)}
=\left|\begin{array}{cc} \dfrac{\partial\wt{\xi}}{\partial\xi} & \ast \\ 
  0 & 1 \end{array}\right|=-1 . 
\] 
Similarly, we have for $\iota_2:(\xi,\eta)\to (\wh{\xi},\wh{\eta})=(\xi,-\eta-\varepsilon)$, 
\[ 
J_2=\dfrac{\partial(\wh{\xi},\wh{\eta})}{\partial(\xi,\eta)}=\left|\begin{array}{cc} 1 & 0 \\ 
\ast & \frac{\partial\wh{\eta}}{\partial\eta} \end{array}\right|=-1 .  
\] 
 
 Hence, the Jacobian of the composition map $\iota_2\circ\iota_1$ 
 is given by $J=\wt{J}_2J_1=1$. which implies that in terms of the uniformising variables 
the map is not only measure-preserving, but actually symplectic. 
In terms of the variables on the bielliptic surface $S$, with affine coordinates $(x, X, y, Y)\in\mathbb C^4$, we define the symplectic form to be
\[ \Omega=\frac{{\rm d}x\wedge{\rm d}y}{X Y}\ . \] 
As a consequence of the derivation above this symplectic form is preserved under both involutions $\iota_1$ and $\iota_2$, i.e., $\wt{\Omega}=\wh{\Omega}=\Omega$. 

The following remark shows that the invariant of the bielliptic QRT map contains the one of the standard QRT map, when the curves $\Gamma$ and $\Gamma^*$ become degenerate.
\begin{remark}\label{rem:QRT} 
Consider the following degeneration of the curve $\Gamma$ {\rm (}respectively $\Gamma^*${\rm )}, which occurs when a pair of its branch points coalesces, with the other pair separating far apart. 

First, consider the change of variables $x=e_3+u^2$, $X=U\,u$, which transforms $\Gamma$ into a curve of the form 
\[ U^2=4(u^2-\gamma_0^2)(u^2-\gamma_1^2), \]
where $\gamma_0^2=(e_1-e_3)$ and $\gamma_1^2=(e_2-e_3)$. Assuming the dilation of one pair of branch points and coalescence of the remaining pair by taking $\gamma_0^2=-1/\epsilon^2$, $\gamma_1^2=\mu^2\,\epsilon^2$, for sufficiently small $\epsilon$, the resulting equation for $(u, U)$ becomes
\[
U^2=\dfrac{4}{\epsilon^2}\,u^2\bigl(1 + \mathcal O(\epsilon^2)\bigr),
\]
i.e., $U \sim \pm 2 u/\epsilon$. Putting these steps together leads to the transformation $x=e_3+ u^2$, $U= 2\,u^2/\epsilon$. The transformation of $(y, Y)$ on $\Gamma^*$ to $(v, V)$ proceeds in a similar manner under the corresponding assumptions on $E_1-E_3$ and $E_2-E_3$.

Substituting these transformations into the invariant $I$, while assuming $c_0=c_0^\circ\epsilon$, $c_1=c_1^\circ\epsilon$, $f_0=f_0^\circ\epsilon$, $f_1=f_1^\circ\epsilon$, $g_0=g_0^\circ\epsilon$, $g_1=g_1^\circ\epsilon$, $h_0=h_0^\circ\epsilon$, $h_1=h_1^\circ\epsilon$, and $i_0=i_0^\circ\epsilon^2$, $i_1=i_1^\circ\epsilon^2$, leads to a new invariant $I^\circ$ that is independent of $\epsilon$ as $\epsilon\to0$. This invariant is biquadratic in two variables $u$ and 
$v$ and, so, we obtain the generic invariant of the standard QRT map. 
\end{remark}

\section{Constant moduli case}\label{s:consmoduli}

The components of $\boldsymbol{\phi}=(\phi_1,\phi_2,\phi_3)$ defined by equation \eqref{eq:f} satisfy the cubic equation \eqref{eq:fcubic} and correspondingly the components of $\boldsymbol{\psi}=(\psi_1,\psi_2,\psi_3)$ satisfy equation \eqref{eq:gcubic}. These conditions place constraints on the entries of the matrices 
$A_j$.  In this section, we provide the general solution of these constraints, under the condition that the moduli $g_2(\bY)$, $g_3(\bY)$ as well as $G_2(\bX)$, $G_3(\bX)$
are constants, i.e., independent of $\bX$ and $\bY$. Although the  
option that the moduli may depend in a nontrivial way on the 
variables of the map provides us with possibly additional freedom in constructing interesting 
dynamical systems, we will not consider this situation in the current paper, and restrict ourselves here to the constant moduli case. 

The statement of finding the general solution for the latter case  
assumes that certain factors appearing in the denominators do not vanish. 
Their vanishing leads to 4 additional special solution branches, which are 
listed in Appendix \ref{app:special}.
In what follows we write the matrices $A_0$ and $A_1$ in terms of the 
following entries: 
\[
A_j=\begin{pmatrix}
a_j&b_j&c_j\\
d_j&e_j&f_j\\
g_j&h_j&i_j\\
\end{pmatrix}\ , \quad j=0,1\ .
\]
(Note that the (3,1)-entry of each of $A_0$, $A_1$ is $g_0$ or $g_1$ respectively and should not be confused with $g_2$, $g_3$ in the definition of $\Gamma$.) 
Taking the components of Equation \eqref{eq:f} and substituting into \eqref{eq:fcubic}, we get an equation in which we can use \eqref{eq:Weier2} to replace even powers of $Y$. The remaining equation is linear in $Y$. Since each coefficient in $Y$ must vanish identically, we obtain two equations that are polynomial in $y$. The vanishing of each coefficient of a power of $y$ leads to 18 polynomial equations for the entries of $A_0$ and $A_1$. 

The generic solution of these equations was found by using a variant
of the primary decomposition algorithm in the {\sc Magma} Computer
Algebra System \cite{BCP}.  The algorithm computes a decomposition of the
radical of the ideal generated by the polynomials (following the general
approach of \cite{GKZ}, combined with sparse interpolation techniques
from \cite{Zippel}).

Each component of the decomposition gives a branch of the complete
solution to the equations, and is described by a list of free parameters,
a list of main variables $v_1, \ldots, v_l$ and a triangular system of
polynomials in the $v_i$, so that the solutions for the branch can be
obtained by setting the parameters to arbitrary values and then solving
for the $v_i$ in the triangular system by back-substitution, assuming
the relevant denominators are non-zero.

The main branch of the solution is described with the free parameters
$b_0$, $e_0$, $f_0$, $h_0$, $i_0$, $e_1$, $f_1$, $g_1$, $h_1$, $i_1$,
$g_2$, $g_3$, $G_2$, $G_3$ and the main variables $a_0$, $c_0$, $d_0$,
$g_0$, $a_1$, $b_1$, $c_1$, $d_1$.  Write $D_0=e_0 i_1 - h_0 f_1$,
$D_1=e_1 i_1 - f_1 h_1$.  Then if $D_0\not=0$, $D_1\not=0$, $h_0\not=0$,
$i_1\not=0$, the solution is described by the following triangular system
in the main variables\footnote{In the list \eqref{eq:conditions2} the choice of the parameters 
for which we solve the constraint equations \eqref{eq:fcubic} and \eqref{eq:gcubic} are 
determined by the monomial ordering of the Gr\"obner basis used in the algorithm.}:

\begin{subequations}\label{eq:conditions2}
\begin{align}
a_0 &= \dfrac{1}{D_0 D_1} \Bigl((-e_0 i_0 h_1 +
f_0 h_0 h_1) c_1 d_1 + (e_0 i_0 e_1 g_1 - f_0 h_0 e_1 g_1) c_1\nonumber \\
& \quad\qquad + (b_0 e_0 i_1^2 - b_0 f_0 h_1 i_1 -
b_0 h_0 f_1 i_1 + b_0 i_0 f_1 h_1) d_1 \nonumber\\
& \quad\qquad - b_0 e_0 f_1 g_1 i_1 + b_0 f_0 e_1 g_1 i_1 + b_0 h_0 f_1^2 g_1
- b_0 i_0 e_1 f_1 g_1\Bigr), \\
c_0 &= \dfrac{1}{D_0} \bigl(\bigl(e_0 i_0 - f_0 h_0\bigr) c_1 + b_0 f_0 i_1 - b_0 i_0 f_1\bigr), \\
d_0 &= \dfrac{1}{D_1} \bigl(\bigl(e_0 i_1 - f_0 h_1\bigr) d_1 - e_0 f_1 g_1 + f_0 e_1 g_1\bigr), \\
g_0 &= \dfrac{1}{D_1} \bigl(\bigl(h_0 i_1 - i_0 h_1\bigr) d_1 - h_0 f_1 g_1 + i_0 e_1 g_1\bigr), \\
a_1 &= \dfrac{1}{D_0} \bigl(\bigl(-h_0 d_1 + e_0 g_1\bigr) c_1 + b_0 i_1 d_1 - b_0 f_1 g_1\bigr),\\
b_1 &= \dfrac{1}{D_0} \bigl(\bigl(e_0 h_1 - h_0 e_1\bigr) c_1 + b_0 e_1 i_1 - b_0 f_1 h_1\bigr),\\
c_1^3 &=
\dfrac{1}{h_0^3}
    \Bigl(\bigl(3 b_0 h_0^2 i_1 + \dfrac{1}{4} e_0^3 i_1 - \dfrac{1}{4} e_0^2 h_0 f_1\bigr) c_1^2\nonumber\\
&\quad\qquad +
\bigl(-3 b_0^2 h_0 i_1^2 - \dfrac{1}{2} b_0 e_0^2 f_1 i_1 + \dfrac{1}{2} b_0 e_0 h_0 f_1^2 + \nonumber\\
&\quad\qquad\qquad\dfrac{1}{4} e_0^2 h_0 i_1^2 g_2 -
\dfrac{1}{2} e_0 h_0^2 f_1 i_1 g_2 + \dfrac{1}{4} h_0^3 f_1^2 g_2\bigr)c_1 \nonumber\\
&\quad\qquad + \bigl(b_0^3 i_1^3 + \dfrac{1}{4} b_0^2 e_0 f_1^2 i_1 -
\dfrac{1}{4} b_0^2 h_0 f_1^3 - \dfrac{1}{4} b_0 e_0^2 i_1^3 g_2 +
\nonumber\\
&\quad\qquad\qquad
\dfrac{1}{2} b_0 e_0 h_0 f_1 i_1^2 g_2 -
\dfrac{1}{4} b_0 h_0^2 f_1^2 i_1 g_2 + \dfrac{1}{4} e_0^3 i_1^3 g_3
\nonumber\\
&\quad\qquad\qquad
- \dfrac{3}{4} e_0^2 h_0 f_1 i_1^2 g_3 +
\dfrac{3}{4} e_0 h_0^2 f_1^2 i_1 g_3 - \dfrac{1}{4} h_0^3 f_1^3 g_3\bigr)\Bigr),\\
d_1^3 &=
\dfrac{1}{i_1^3}
    \Bigl(
    \bigl(-\dfrac{1}{4} e_1 h_1^2 i_1 + 3 f_1 g_1 i_1^2 + \dfrac{1}{4} f_1 h_1^3\bigr) d_1^2\nonumber\\
&\quad\qquad + \bigl(\dfrac{1}{2} e_1^2 g_1 h_1 i_1 +
    \dfrac{1}{4} e_1^2 i_1^3 G_2 - \dfrac{1}{2} e_1 f_1 g_1 h_1^2 \nonumber\\
&\quad\qquad\qquad - \dfrac{1}{2} e_1 f_1 h_1 i_1^2 G_2
    - 3 f_1^2 g_1^2 i_1 + \dfrac{1}{4} f_1^2 h_1^2 i_1 G_2\bigr) d_1\nonumber\\
&\quad\qquad + (- \dfrac{1}{4} e_1^3 g_1^2 i_1 - \dfrac{1}{4} e_1^3 i_1^3 G_3
    + \dfrac{1}{4} e_1^2 f_1 g_1^2 h_1
    - \dfrac{1}{4} e_1^2 f_1 g_1 i_1^2 G_2 \nonumber\\
&\quad\qquad\qquad + \dfrac{3}{4} e_1^2 f_1 h_1 i_1^2 G_3 +
    \dfrac{1}{2} e_1 f_1^2 g_1 h_1 i_1 G_2 -
    \dfrac{3}{4} e_1 f_1^2 h_1^2 i_1 G_3\nonumber\\
&
\quad\qquad\qquad
+ f_1^3 g_1^3 - \dfrac{1}{4} f_1^3 g_1 h_1^2 G_2 + \dfrac{1}{4} f_1^3 h_1^3 G_3)\Bigr).
\end{align}
\end{subequations}

\begin{example}\label{ex:integer_ex} The integer matrices
\[
A_0=\begin{pmatrix}
2&-16&2\\
3&-2&3\\
-2&5&-2\\
\end{pmatrix}, \quad
A_1=\begin{pmatrix}
-8&-2&-8\\
2&1&2\\
1&0&1\\
\end{pmatrix},
\]
provide solutions of equations \eqref{eq:conditions2}, under the conditions
\[
2g_2 + g_3 = 16, \quad G_3=-1.
\]
\end{example}
There are 4 main other branches of the solution listed in Appendix \ref{app:special} for the cases where at
least one of $D_0$, $D_1$, $h_0$ or $i_0$ is zero .

\section{Geometry of the bielliptic QRT map}\label{s:geom}
To make a connection to the standard terminology in the literature, we recall the construction of a bielliptic surface $S$ of type $I_a$, as explained in \cite{GriffithsHarris}. Assuming that the period lattice of $\Gamma^*$ is generated by $\{1, \tau\}$, we note that $\Gamma^*$ has an automorphism of order 2, given by $\eta\mapsto -\eta$, or equivalently $Y\mapsto -Y$. The surface $S$ is generated by quotienting $\Gamma\times \Gamma^*$ by the group of automorphisms generated by 
\[ \phi(\xi, \eta)=\left(\xi+\dfrac{\tau}{2}, -\eta\right).\]
While we do not make use of this construction explicitly, we rely on the fact that the resulting bielliptic surface $S$ is a smooth algebraic surface.  

In contrast, the geometric construction of the standard QRT map shows that it is a dynamical system on an elliptic surface \cite{Tsuda,Duist}. The bielliptic QRT map however can be considered as a higher dimensional 
generalisation in 4D space of variables $(x,X;y,Y)\in 
\mathbb{P}^2\times \mathbb{P}^2$ on the 
pencil 
\[ \{ (x,X;y,Y)\,|\,\boldsymbol{X}^T\boldsymbol{A}_0\boldsymbol{Y}-\lambda 
\boldsymbol{X}^T\boldsymbol{A}_1\boldsymbol{Y}=0 \}_{\lambda\in \mathbb{P}^1}\ .  \]
Rather than intersecting the corresponding surface 
with lines, as in \cite{Tsuda}, we intersect with 
the elliptic curves $\Gamma$ and $\Gamma^\ast$. In fact, the pencil can be viewed 
as the solution space of a 10-parameter family of affine-linear equations: 
\begin{equation}\label{eq:affinelinear}
Q_\lambda(x,X,y,Y;\boldsymbol{A}_0,\boldsymbol{A}_1)=0 \ , 
\end{equation}
where the bielliptic QRT map comes from a reduction of the solution space to solutions coming from the intersection with the curves $\Gamma$ and 
$\Gamma^\ast$. 

Equation \eqref{eq:affinelinear} resembles quadrilateral lattice equations (i.e., partial difference equations on a 2-dimensional grid), which are integrable in the sense that they satisfy the 
condition of consistency-around-the-cube (CAC) and were classified in \cite{ABS}. This possible connection between the two subjects is tantalising. In particular, we note that the most general equation in that class, known as Q4, has coefficients 
parametrised by an elliptic curve and possesses a class of $N$-soliton type solutions involving a pair of elliptic curves \cite{AtkNij2010}.

\section{Conclusion}
In this paper, we described an integrable, symplectic dynamical system on a bielliptic surface, which generalises the QRT map. The situation we focused  on is that 
where the moduli of the two curves $\Gamma$ and $\Gamma^\ast$ are constant (i.e., 
independent of the variables of the companion curve), in which case we find an 10-parameter family of two-dimensional symplectic maps possessing an algebraic invariant. The alternative option 
is to allow the moduli to depend on the companion variables, in which case we would have 
a more complicated system on a bielliptic surface. 
We will explore the latter situation in a future publication. 

There are many remaining open questions. 
One concerns special cases of matrices $A_j$, such as  
diagonal, symmetric, skew symmetric or upper/lower triangular matrices. These may be 
related to the classification of periodic fix points of the bielliptic QRT map, 
in the spirit of \cite{Tsuda}. 
Another concerns the issue of what happens when the Weierstrass curves $\Gamma$ and $\Gamma^\ast$ are birationally transformed to elliptic curves in general position, 
and whether that would lift some of the conditions \eqref{eq:conditions2} on the 
matrices $A_i$. The geometric classification of bielliptic surfaces also raises interesting questions on how the bielliptic QRT map may be described geometrically.

\section*{Acknowledgments}
FWN is grateful for the support and hospitality of the Sydney Mathematical Research Institute (SMRI) during a visit where the current work was initiated. He was supported by the Foreign Expert Program of the Ministry of Sciences and Technology of China, grant number G2023013065L when this work was finalised. 

\appendix
\section{Special branches of solutions}\label{app:special}
In this appendix, we list the special solutions that arise when one of $i_1$, $e_1 i_1 - f_1 h_1$, $h_0$, $e_0 i_1 - h_0 f_1$ vanishes. 
\subsection{Case 1: $i_1 = 0$}
\subsubsection{Case 1.1: $i_1 = 0$, $h_0h_1f_1\not=0$}
\begin{align*}
    a_0 &= -\frac{(e_0 i_0 - f_0 h_0)}{h_0 f_1^2} c_1 d_1 - \frac{(-e_0 i_0 e_1 g_1 + f_0 h_0 e_1 g_1)}{h_0 f_1^2 h_1} c_1 \\
    &\qquad + \frac{b_0 i_0}{h_0 f_1} d_1 - \frac{(-b_0 h_0 f_1 g_1 + b_0 i_0 e_1 g_1)}{h_0 f_1 h_1}, \\
    c_0 &= -\frac{(e_0 i_0 - f_0 h_0)}{h_0 f_1} c_1 + \frac{b_0 i_0}{h_0}, \\
    d_0 &= \frac{f_0}{f_1} d_1 - \frac{(-e_0 f_1 g_1 + f_0 e_1 g_1)}{f_1 h_1}, \\
    g_0 &= \frac{i_0}{f_1} d_1 - \frac{(-h_0 f_1 g_1 + i_0 e_1 g_1)}{f_1 h_1}, \\
    a_1 &= \frac{1}{f_1} c_1 d_1 - \frac{e_0 g_1}{h_0 f_1} c_1 + \frac{b_0 g_1}{h_0}, \\
    b_1 &= -\frac{(e_0 h_1 - h_0 e_1)}{h_0 f_1} c_1 + \frac{b_0 h_1}{h_0}, \\
    c_1^3 &= -\frac{1}{4} \frac{e_0^2 f_1}{h_0^2} c_1^2 - \frac{(-\frac{1}{2} b_0 e_0 f_1^2 - \frac{1}{4} h_0^2 f_1^2 g_2)}{h_0^2} c_1 - \frac{(\frac{1}{4} b_0^2 f_1^3 + \frac{1}{4} h_0^2 f_1^3 g_3)}{h_0^2}, \\
    d_1^2 &= \frac{2 e_1 g_1}{h_1} d_1 - \frac{(e_1^2 g_1^2 h_1 + 4 f_1^2 g_1^3 - f_1^2 g_1 h_1^2 G_2 + f_1^2 h_1^3 G_3)}{h_1^3}.
\end{align*}

\subsubsection{Case 1.2: $i_1 = 0$, $h_0=0$, with $e_0h_1f_1\not=0$}
\begin{align*}
    a_0 &= \frac{i_0}{f_1 h_1} b_1 d_1 - \frac{i_0 e_1 g_1}{f_1 h_1^2} b_1 + \frac{b_0 f_0 h_1 - b_0 i_0 e_1}{e_0 f_1 h_1} d_1 \\
    &\qquad + \frac{b_0 e_0 f_1 g_1 h_1 - b_0 f_0 e_1 g_1 h_1 + b_0 i_0 e_1^2 g_1}{e_0 f_1 h_1^2}, \\
    c_0 &= \frac{i_0}{h_1} b_1 - \frac{b_0 f_0 h_1 - b_0 i_0 e_1}{e_0 h_1}, \\
    d_0 &= \frac{f_0}{f_1} d_1 - \frac{-e_0 f_1 g_1 + f_0 e_1 g_1}{f_1 h_1}, \\
    g_0 &= \frac{i_0}{f_1} d_1 - \frac{i_0 e_1 g_1}{f_1 h_1}, \\
    a_1 &= \frac{g_1}{h_1} b_1 + \frac{b_0}{e_0} d_1 - \frac{b_0 e_1 g_1}{e_0 h_1}, \\
    b_1^2 &= \frac{2 b_0 e_1}{e_0} b_1 - \frac{4 b_0^3 h_1^2 + b_0^2 e_0 e_1^2 - b_0 e_0^2 h_1^2 g_2 + e_0^3 h_1^2 g_3}{e_0^3}, \\
    c_1 &= \frac{b_0 f_1}{e_0}, \\
    d_1^2 &= \frac{2 e_1 g_1}{h_1} d_1 - \frac{e_1^2 g_1^2 h_1 + 4 f_1^2 g_1^3 - f_1^2 g_1 h_1^2 G_2 + f_1^2 h_1^3 G_3}{h_1^3}.
\end{align*}
\subsubsection{Case 1.3: $i_1 = 0$,$h_0=0$, $f_1=0$, with $i_0\,h_1\,e_0\not=0$}
\begin{align*}
    a_0 &= \frac{g_0 b_1}{h_1}  + b_0 \frac{( f_0 h_1 - i_0 e_1)}{e_0 h_1 i_0} g_0 + \frac{b_0 g_1}{h_1}, \\
    c_0 &= \frac{i_0 b_1}{h_1} + b_0 \frac{( f_0 h_1 - i_0 e_1)}{e_0 h_1}, \\
    d_0 &= \frac{f_0 g_0}{i_0} + \frac{e_0 g_1}{h_1}, \\
    g_0^2 &= -\frac{i_0^2}{h_1^3} \Bigl(h_1^3 G_3 - h_1^2 g_1 G_2 + 4 g_1^3\Bigr), \\
    a_1 &= \frac{g_1 b_1}{h_1} , \\
    b_1^2 &= \frac{2 b_0 e_1}{e_0} b_1 - \frac{4 b_0^3 h_1^2 + b_0^2 e_0 e_1^2 - b_0 e_0^2 h_1^2 g_2 + e_0^3 h_1^2 g_3}{e_0^3}, \\
    c_1&=0, \\
    d_1 &= \frac{e_1 g_1}{h_1}.
\end{align*}
\subsubsection{Case 1.4} $i_1 = 0$,$h_0=0$, $f_1=0$, $i_0=0$, with $h_1 e_0\not=0$
\begin{align*}
    a_0 &= \frac{b_0}{e_0} d_0, \\
    c_0 &= \frac{b_0 f_0}{e_0}, \\
    d_0^2 &= 2 \frac{e_0 g_1}{h_1} d_0 - \frac{e_0^2 h_1 g_1^2 + f_0^2 h_1^3 G_3 - f_0^2 h_1^2 g_1 G_2 + 4 f_0^2 g_1^3}{h_1^3}, \\
    g_0&=0, \\
    a_1 &= \frac{g_1}{h_1} b_1, \\
    b_1^2 &= 2 \frac{b_0 e_1}{e_0} b_1 - \frac{4 b_0^3 h_1^2 + b_0^2 e_0 e_1^2 - b_0 e_0^2 h_1^2 g_2 + e_0^3 h_1^2 g_3}{e_0^3}, \\
    c_1&=0, \\
    d_1 &= \frac{e_1 g_1}{h_1}.
\end{align*}

\subsection{Case 2: $h_0 = 0$}
\subsubsection{Case 2.1: $h_0=0$, $ i_1\, e_0\,(e_1 i_1 - f_1 h_1)\not=0$}
\begin{align*}
    a_0 &= -\frac{i_0 h_1 c_1 d_1}{e_1 i_1^2 - f_1 i_1 h_1}  + \frac{i_0 e_1 g_1 c_1}{e_1 i_1^2 - f_1 i_1 h_1}  - \frac{ b_0\bigl(- e_0 i_1^2 +  f_0 i_1 h_1 -  i_0 f_1 h_1\bigr)}{e_0 \bigl(e_1 i_1^2 - f_1 i_1 h_1\bigr)} d_1 \\
    &\quad\qquad - \frac{b_0 g_1\bigl( e_0 f_1  i_1 - f_0 e_1  i_1 +  i_0 e_1 f_1 \bigr)}{e_0 \bigl(e_1 i_1^2 - f_1 i_1 h_1\bigr)}, \\
    c_0 &= \frac{i_0}{i_1} c_1 - \frac{-b_0 f_0 i_1 + b_0 i_0 f_1}{e_0 i_1}, \\
    d_0 &= \frac{e_0 i_1 - f_0 h_1}{e_1 i_1 - f_1 h_1} d_1 - \frac{e_0 f_1 g_1 - f_0 e_1 g_1}{e_1 i_1 - f_1 h_1}, \\
    g_0 &= -\frac{i_0 h_1}{e_1 i_1 - f_1 h_1} d_1 + \frac{i_0 e_1 g_1}{e_1 i_1 - f_1 h_1}, \\
    a_1 &= \frac{g_1}{i_1} c_1 + \frac{b_0}{e_0} d_1 - \frac{b_0 f_1 g_1}{e_0 i_1}, \\
    b_1 &= \frac{h_1}{i_1} c_1 - \frac{-b_0 e_1 i_1 + b_0 f_1 h_1}{e_0 i_1}, \\
    c_1^2 &= \frac{2 b_0 f_1}{e_0} c_1 - \frac{4 b_0^3 i_1^2 + b_0^2 e_0 f_1^2 - b_0 e_0^2 i_1^2 g_2 + e_0^3 i_1^2 g_3}{e_0^3}, \\
    d_1^3 &= -\frac{1}{i_1^3}\,\Biggl(\frac{1}{4} \bigl(e_1 i_1 h_1^2 - 12 f_1 g_1 i_1^2 -  f_1 h_1^3\bigr) d_1^2 \\
    &\qquad\quad + \bigl(-\frac{1}{2} e_1^2 g_1 i_1 h_1 - \frac{1}{4} e_1^2 i_1^3 G_2 + \frac{1}{2} e_1 f_1 g_1 h_1^2 + \frac{1}{2} e_1 f_1 i_1^2 h_1 G_2 \\
    &\qquad\qquad\quad + 3 f_1^2 g_1^2 i_1 - \frac{1}{4} f_1^2 i_1 h_1^2 G_2\bigr) d_1 \\
    &\qquad\quad + \Bigl(\frac{1}{4} e_1^3 g_1^2 i_1 + \frac{1}{4} e_1^3 i_1^3 G_3 - \frac{1}{4} e_1^2 f_1 g_1^2 h_1 + \frac{1}{4} e_1^2 f_1 g_1 i_1^2 G_2 \\
    &\qquad\qquad\quad - \frac{3}{4} e_1^2 f_1 i_1^2 h_1 G_3 - \frac{1}{2} e_1 f_1^2 g_1 i_1 h_1 G_2\\
    &\qquad\qquad\quad + \frac{3}{4} e_1 f_1^2 i_1 h_1^2 G_3 - f_1^3 g_1^3 + \frac{1}{4} f_1^3 g_1 h_1^2 G_2 - \frac{1}{4} f_1^3 h_1^3 G_3\Bigr)\Biggr).
\end{align*}
\subsubsection{Case 2.2: $h_0=0$, $(e_1 i_1 - f_1 h_1)=0$, $ i_1\,h_1\,i_0 e_0\not=0$}
\begin{align*}
    a_0 &= \frac{g_0 c_1}{i_1}  - \frac{b_0\bigl( e_0 i_1^2 -  f_0 h_1 i_1 +  i_0 f_1 h_1\bigr)}{e_0 i_0 h_1 i_1} g_0 + \frac{b_0 g_1}{h_1}, \\
    c_0 &= \frac{i_0}{i_1} c_1 + \frac{b_0\bigl(  f_0 i_1 -  i_0 f_1\bigr)}{e_0 i_1}, \\
    d_0 &= -\frac{e_0 i_1 - f_0 h_1}{i_0 h_1} g_0 + \frac{e_0 g_1}{h_1}, \\
    g_0^3 &= \frac{i_0\bigl(3  g_1 i_1^2 + \frac{1}{4}  h_1^3\bigr)}{i_1^3} g_0^2 - \frac{i_0^2 \bigl(3 g_1^2 - \frac{1}{4}  h_1^2 G_2\bigr)}{i_1^2} g_0 \\
    &\qquad\quad+ \frac{i_0^3 \bigl( g_1^3 - \frac{1}{4}  g_1 h_1^2 G_2 + \frac{1}{4}  h_1^3 G_3\bigr)}{i_1^3}, \\
    a_1 &= \frac{g_1}{i_1} c_1, \quad 
    b_1 = \frac{h_1}{i_1} c_1, \\
    c_1^2 &= \frac{2 b_0 f_1}{e_0} c_1 - \frac{4 b_0^3 i_1^2 + b_0^2 e_0 f_1^2 - b_0 e_0^2 i_1^2 g_2 + e_0^3 i_1^2 g_3}{e_0^3}, \\
    d_1 &= \frac{f_1 g_1}{i_1}, \quad
    e_1 = \frac{f_1 h_1}{i_1}.
\end{align*}
\subsubsection{Case 2.3: $h_0=0$, $(e_1 i_1 - f_1 h_1)=0$, $h_1=0$, $ i_1\, e_0\not=0$}

\begin{align*}
    a_0 &= \frac{b_0}{e_0} d_0 + \frac{i_0 g_1}{i_1^2} c_1 - \frac{b_0 i_0 f_1 g_1}{e_0 i_1^2}, \\
    c_0 &= \frac{i_0}{i_1} c_1 + \frac{b_0\bigl( f_0 i_1 -  i_0 f_1\bigr)}{e_0 i_1}, \\
    d_0^3 &= 3 \frac{f_0 g_1}{i_1} d_0^2 + \frac{\frac{1}{4} e_0^2 i_1^2 G_2 - 3 f_0^2 g_1^2}{i_1^2} d_0 - \frac{1}{4}\frac{ e_0^3 g_1^2 i_1 +  e_0^3 i_1^3 G_3 + e_0^2 f_0 g_1 i_1^2 G_2 - 4 f_0^3 g_1^3}{i_1^3}, \\
    g_0 &= \frac{i_0 g_1}{i_1}, \quad
    a_1 = \frac{g_1}{i_1} c_1, \quad
    b_1 = 0, \\
    c_1^2 &= \frac{2 b_0 f_1}{e_0} c_1 - \frac{4 b_0^3 i_1^2 + b_0^2 e_0 f_1^2 - b_0 e_0^2 i_1^2 g_2 + e_0^3 i_1^2 g_3}{e_0^3}, \quad
    d_1 = \frac{f_1 g_1}{i_1}, e_1=0.
\end{align*}
\subsubsection{Case 2.4: $h_0=0$, $(e_1 i_1 - f_1 h_1)=0$, $i_0=0$, $ i_1\, e_0\not=0$}
\begin{align*}
    a_0 &= \frac{b_0}{e_0} d_0, \quad
    c_0 = \frac{b_0 f_0}{e_0}, \\
    d_0^3 &= -\frac{1}{i_1^3}
    \Biggl(-\frac{1}{4} \bigl(e_0 h_1^2 i_1 - 12 f_0 g_1 i_1^2 - f_0 h_1^3\bigr) d_0^2 \\
    &\qquad\quad - \bigl(-\frac{1}{2} e_0^2 g_1 h_1 i_1 - \frac{1}{4} e_0^2 i_1^3 G_2 + \frac{1}{2} e_0 f_0 g_1 h_1^2 + \frac{1}{2} e_0 f_0 h_1 i_1^2 G_2 + 3 f_0^2 g_1^2 i_1 - \frac{1}{4} f_0^2 h_1^2 i_1 G_2\bigr) d_0 \\
    &\qquad\quad -\bigl(\frac{1}{4} e_0^3 g_1^2 i_1 + \frac{1}{4} e_0^3 i_1^3 G_3 - \frac{1}{4} e_0^2 f_0 g_1^2 h_1 + \frac{1}{4} e_0^2 f_0 g_1 i_1^2 G_2 - \frac{3}{4} e_0^2 f_0 h_1 i_1^2 G_3 \\
    &\qquad\qquad\quad - \frac{1}{2} e_0 f_0^2 g_1 h_1 i_1 G_2 + \frac{3}{4} e_0 f_0^2 h_1^2 i_1 G_3 - f_0^3 g_1^3 + \frac{1}{4} f_0^3 g_1 h_1^2 G_2 - \frac{1}{4} f_0^3 h_1^3 G_3\bigr)\Biggr), \\
    g_0&=0, \quad
    a_1 = \frac{g_1}{i_1} c_1, \quad
    b_1 = \frac{h_1}{i_1} c_1, \\
    c_1^2 &= \frac{2 b_0 f_1}{e_0} c_1 - \frac{4 b_0^3 i_1^2 + b_0^2 e_0 f_1^2 - b_0 e_0^2 i_1^2 g_2 + e_0^3 i_1^2 g_3}{e_0^3}, \\
    d_1 &= \frac{f_1 g_1}{i_1}, \quad
    e_1 = \frac{f_1 h_1}{i_1}.
\end{align*}

\subsection{Case $e_0 i_1 - h_0 f_1=0$}
\subsubsection{$e_0 i_1 - h_0 f_1=0$, $i_1(e_1 i_1 - f_1 h_1) h_0\not=0$}
\begin{align*}
    a_0 &= -\frac{(f_0  i_1 - i_0 f_1 )}{e_1^2 i_1^2 - 2 e_1 f_1 h_1 i_1 + f_1^2 h_1^2} b_1 d_1 h_1 + \frac{f_0 i_1 - i_0  f_1}{e_1^2 i_1^2 - 2 e_1 f_1 h_1 i_1 + f_1^2 h_1^2} b_1 e_1 g_1 \\
    &\quad - \frac{-b_0 f_0 h_1^2 i_1 - b_0 h_0 e_1 i_1^2 + b_0 h_0 f_1 h_1 i_1 + b_0 i_0 e_1 h_1 i_1}{h_0\bigl( e_1^2 i_1^2 - 2  e_1 f_1 h_1 i_1 +  f_1^2 h_1^2\bigr)} d_1 \\
    &\quad - \frac{b_0 f_0 e_1 g_1 h_1 i_1 + b_0 h_0 e_1 f_1 g_1 i_1 - b_0 h_0 f_1^2 g_1 h_1 - b_0 i_0 e_1^2 g_1 i_1}{h_0\bigl( e_1^2 i_1^2 - 2  e_1 f_1 h_1 i_1 +  f_1^2 h_1^2\bigr)}, \\
    c_0 &= \frac{f_0 i_1 - i_0 f_1}{e_1 i_1 - f_1 h_1} b_1 - \frac{b_0 f_0 h_1 i_1 - b_0 i_0 e_1 i_1}{h_0\bigl( e_1 i_1 -  f_1 h_1\bigr)}, \\
    d_0 &= -\frac{f_0 h_1 - h_0 f_1}{e_1 i_1 - f_1 h_1} d_1 - \frac{-g_1\bigl(f_0 e_1  i_1 + h_0 f_1^2 \bigr)}{i_1\bigl(e_1 i_1 - f_1 h_1\bigr)}, \\
    e_0 &= \frac{h_0 f_1}{i_1}, \\
    g_0 &= \frac{h_0 i_1 - i_0 h_1}{e_1 i_1 - f_1 h_1} d_1 + \frac{h_0 f_1 g_1 - i_0 e_1 g_1}{e_1 i_1 - f_1 h_1}, \\
    a_1 &= \frac{i_1}{e_1 i_1 - f_1 h_1} b_1 d_1 - \frac{f_1 g_1}{e_1 i_1 - f_1 h_1} b_1 - \frac{b_0 h_1 i_1}{h_0 \bigl(e_1 i_1 -  f_1 h_1\bigr)} d_1 + \frac{b_0 e_1 g_1 i_1}{h_0 \bigl(e_1 i_1 - f_1 h_1\bigr)}, \\
    b_1^3 &= \frac{1}{h_0 i_1^3}\bigl(3 b_0 h_1 i_1^3 - \frac{1}{4} h_0 e_1 f_1^2 i_1 + \frac{1}{4} h_0 f_1^3 h_1\bigr) b_1^2 \\
    & - \frac{1}{h_0^2 i_1^2}\bigl(3 b_0^2 h_1^2 i_1^2 - \frac{1}{2} b_0 h_0 e_1^2 f_1 i_1 + \frac{1}{2} b_0 h_0 e_1 f_1^2 h_1 - \frac{1}{4} h_0^2 e_1^2 i_1^2 g_2 \\
    &\qquad\quad + \frac{1}{2} h_0^2 e_1 f_1 h_1 i_1 g_2 - \frac{1}{4} h_0^2 f_1^2 h_1^2 g_2\bigr) b_1 \\
    & - \frac{1}{h_0^3 i_1^3}\bigl(-b_0^3 h_1^3 i_1^3 
    + \frac{1}{4} b_0^2 h_0 e_1^3 i_1^3 
    - \frac{1}{4} b_0^2 h_0 e_1^2 f_1 h_1 i_1^2 \\
    &\qquad\quad
    + \frac{1}{4} b_0 h_0^2 e_1^2 h_1 i_1^3 g_2 
    - \frac{1}{2} b_0 h_0^2 e_1 f_1 h_1^2 i_1^2 g_2 
    + \frac{1}{4} b_0 h_0^2 f_1^2 h_1^3 i_1 g_2 \\
    &\qquad\quad
    + \frac{1}{4} h_0^3 e_1^3 i_1^3 g_3 
    - \frac{3}{4} h_0^3 e_1^2 f_1 h_1 i_1^2 g_3 
    + \frac{3}{4} h_0^3 e_1 f_1^2 h_1^2 i_1 g_3 
    - \frac{1}{4} h_0^3 f_1^3 h_1^3 g_3\bigr), \\
    c_1 &= \frac{b_0 i_1}{h_0}, \\
    d_1^3 &= -\,\frac{1}{i_1^3}
    \Biggl(
    \frac{1}{4} \bigl(e_1 h_1^2 i_1 - 12 f_1 g_1 i_1^2 - f_1 h_1^3\bigr)d_1^2 \\
    &\qquad 
    + \bigl(-\frac{1}{2} e_1^2 g_1 h_1 i_1 - \frac{1}{4} e_1^2 i_1^3 G_2 + \frac{1}{2} e_1 f_1 g_1 h_1^2 + \frac{1}{2} e_1 f_1 h_1 i_1^2 G_2 + 3 f_1^2 g_1^2 i_1 - \frac{1}{4} f_1^2 h_1^2 i_1 G_2\bigr) d_1 \\
    &\qquad + \bigl(\frac{1}{4} e_1^3 g_1^2 i_1 + \frac{1}{4} e_1^3 i_1^3 G_3 - \frac{1}{4} e_1^2 f_1 g_1^2 h_1 + \frac{1}{4} e_1^2 f_1 g_1 i_1^2 G_2 - \frac{3}{4} e_1^2 f_1 h_1 i_1^2 G_3 \\
    &\qquad\qquad - \frac{1}{2} e_1 f_1^2 g_1 h_1 i_1 G_2 + \frac{3}{4} e_1 f_1^2 h_1^2 i_1 G_3 - f_1^3 g_1^3 + \frac{1}{4} f_1^3 g_1 h_1^2 G_2 - \frac{1}{4} f_1^3 h_1^3 G_3\bigr)\Biggr).
\end{align*}
\subsubsection{$e_0 i_1 - h_0 f_1=0$, $e_1 i_1 - f_1 h_1=0$, $i_1 h_0(h_0i_1-i_0h_1)\not=0$}
\begin{align*}
    a_0 &= -\frac{(f_0 h_1 i_1 - i_0 f_1 h_1)}{\bigl(e_1 i_1 - f_1 h_1\bigr)^2} b_1 d_1 
    - \frac{(-f_0 e_1 g_1 i_1 + i_0 e_1 f_1 g_1)}{\bigl(e_1 i_1 - f_1 h_1\bigr)^2} b_1 \\
    &\quad - \frac{(-b_0 f_0 h_1^2 i_1 - b_0 h_0 e_1 i_1^2 + b_0 h_0 f_1 h_1 i_1 + b_0 i_0 e_1 h_1 i_1)}{h_0\bigl(e_1 i_1 - f_1 h_1\bigr)^2} d_1 \\
    &\quad - \frac{(b_0 f_0 e_1 g_1 h_1 i_1 + b_0 h_0 e_1 f_1 g_1 i_1 - b_0 h_0 f_1^2 g_1 h_1 - b_0 i_0 e_1^2 g_1 i_1)}{h_0\bigl(e_1 i_1 - f_1 h_1\bigr)^2}, \\
    c_0 &= \frac{(f_0 i_1 - i_0 f_1)}{e_1 i_1 - f_1 h_1} b_1 - \frac{(b_0 f_0 h_1 i_1 - b_0 i_0 e_1 i_1)}{h_0 \bigl(e_1 i_1 -  f_1 h_1\bigr)}, \\
    d_0 &= -\frac{(f_0 h_1 - h_0 f_1)}{e_1 i_1 - f_1 h_1} d_1 + \frac{g_1\bigl(f_0 e_1  i_1 - h_0 f_1^2 \bigr)}{i_1 \bigl(e_1 i_1 - f_1 h_1 \bigr)}, \\
    e_0 &= \frac{h_0 f_1}{i_1}, \\
    g_0 &= \frac{(h_0 i_1 - i_0 h_1)}{e_1 i_1 - f_1 h_1} d_1 + \frac{(h_0 f_1 g_1 - i_0 e_1 g_1)}{e_1 i_1 - f_1 h_1}, \\
    a_1 &= \frac{i_1}{e_1 i_1 - f_1 h_1} b_1 d_1 - \frac{f_1 g_1}{e_1 i_1 - f_1 h_1} b_1 - \frac{b_0 h_1 i_1}{h_0 \bigl(e_1 i_1 - f_1 h_1\bigr)} d_1 + \frac{b_0 e_1 g_1 i_1}{h_0 \bigl(e_1 i_1 - f_1 h_1\bigr)}, \\
    b_1^3 &= \frac{3 b_0 h_1 i_1^3 - \frac{1}{4} h_0 e_1 f_1^2 i_1 + \frac{1}{4} h_0 f_1^3 h_1}{h_0 i_1^3} b_1^2 \\
    &\quad - \frac{3 b_0^2 h_1^2 i_1^2 - \frac{1}{2} b_0 h_0 e_1^2 f_1 i_1 + \frac{1}{2} b_0 h_0 e_1 f_1^2 h_1 - \frac{1}{4} h_0^2 e_1^2 i_1^2 g_2 + \frac{1}{2} h_0^2 e_1 f_1 h_1 i_1 g_2 - \frac{1}{4} h_0^2 f_1^2 h_1^2 g_2}{h_0^2 i_1^2} b_1 \\
    &\quad - \frac{1}{h_0^3 i_1^3}\bigl(-b_0^3 h_1^3 i_1^3 + \frac{1}{4} b_0^2 h_0 e_1^3 i_1^3 - \frac{1}{4} b_0^2 h_0 e_1^2 f_1 h_1 i_1^2 + \frac{1}{4} b_0 h_0^2 e_1^2 h_1 i_1^3 g_2 \\
    &\qquad\qquad - \frac{1}{2} b_0 h_0^2 e_1 f_1 h_1^2 i_1^2 g_2 + \frac{1}{4} b_0 h_0^2 f_1^2 h_1^3 i_1 g_2 + \frac{1}{4} h_0^3 e_1^3 i_1^3 g_3 \\
    &\qquad\qquad - \frac{3}{4} h_0^3 e_1^2 f_1 h_1 i_1^2 g_3 + \frac{3}{4} h_0^3 e_1 f_1^2 h_1^2 i_1 g_3 - \frac{1}{4} h_0^3 f_1^3 h_1^3 g_3\bigr), \\
    c_1 &= \frac{b_0 i_1}{h_0}, \\
    d_1^3 &= \frac{1}{i_1^3}
    \Biggl(
    -\frac{1}{4} \bigl(
    e_1 h_1^2 i_1 - 12 f_1 g_1 i_1^2 - \frac{1}{4} f_1 h_1^3
    \bigr) d_1^2 \\
    &\quad - \bigl(-\frac{1}{2} e_1^2 g_1 h_1 i_1 - \frac{1}{4} e_1^2 i_1^3 G_2 + \frac{1}{2} e_1 f_1 g_1 h_1^2 + \frac{1}{2} e_1 f_1 h_1 i_1^2 G_2 + 3 f_1^2 g_1^2 i_1 - \frac{1}{4} f_1^2 h_1^2 i_1 G_2\bigr) d_1 \\
    &\quad 
    - \bigl(
    \frac{1}{4} e_1^3 g_1^2 i_1 + \frac{1}{4} e_1^3 i_1^3 G_3 - \frac{1}{4} e_1^2 f_1 g_1^2 h_1 + \frac{1}{4} e_1^2 f_1 g_1 i_1^2 G_2 - \frac{3}{4} e_1^2 f_1 h_1 i_1^2 G_3 \\
    &\qquad\qquad - \frac{1}{2} e_1 f_1^2 g_1 h_1 i_1 G_2 + \frac{3}{4} e_1 f_1^2 h_1^2 i_1 G_3 - f_1^3 g_1^3 + \frac{1}{4} f_1^3 g_1 h_1^2 G_2 - \frac{1}{4} f_1^3 h_1^3 G_3\bigr)\Biggr).
\end{align*}
\subsubsection{$e_0 i_1 - h_0 f_1=0$, $e_1 i_1 - f_1 h_1=0$, 
$h_0 i_1 - i_0 h_1=0$, $i_1h_1i_0(f_0i_1-i_0f_1)\not=0$}
\begin{align*}
    a_0 &= -\frac{(f_0 h_1 i_1 - i_0 f_1 h_1)}{e_1^2 i_1^2 - 2 e_1 f_1 h_1 i_1 + f_1^2 h_1^2} b_1 d_1 - \frac{-f_0 e_1 g_1 i_1 + i_0 e_1 f_1 g_1}{e_1^2 i_1^2 - 2 e_1 f_1 h_1 i_1 + f_1^2 h_1^2} b_1 \\
    &\quad - \frac{-b_0 f_0 h_1^2 i_1 - b_0 h_0 e_1 i_1^2 + b_0 h_0 f_1 h_1 i_1 + b_0 i_0 e_1 h_1 i_1}{h_0 e_1^2 i_1^2 - 2 h_0 e_1 f_1 h_1 i_1 + h_0 f_1^2 h_1^2} d_1 \\
    &\quad - \frac{b_0 f_0 e_1 g_1 h_1 i_1 + b_0 h_0 e_1 f_1 g_1 i_1 - b_0 h_0 f_1^2 g_1 h_1 - b_0 i_0 e_1^2 g_1 i_1}{h_0 e_1^2 i_1^2 - 2 h_0 e_1 f_1 h_1 i_1 + h_0 f_1^2 h_1^2}, \\
    c_0 &= \frac{f_0 i_1 - i_0 f_1}{e_1 i_1 - f_1 h_1} b_1 - \frac{b_0 f_0 h_1 i_1 - b_0 i_0 e_1 i_1}{h_0 e_1 i_1 - h_0 f_1 h_1}, \\
    d_0 &= -\frac{(f_0 h_1 - h_0 f_1)}{e_1 i_1 - f_1 h_1} d_1 - \frac{-f_0 e_1 g_1 i_1 + h_0 f_1^2 g_1}{e_1 i_1^2 - f_1 h_1 i_1}, \\
    e_0 &= \frac{h_0 f_1}{i_1}, \\
    g_0 &= \frac{h_0 i_1 - i_0 h_1}{e_1 i_1 - f_1 h_1} d_1 - \frac{h_0 f_1 g_1 - i_0 e_1 g_1}{e_1 i_1 - f_1 h_1}, \\
    a_1 &= \frac{i_1}{e_1 i_1 - f_1 h_1} b_1 d_1 - \frac{f_1 g_1}{e_1 i_1 - f_1 h_1} b_1 - \frac{b_0 h_1 i_1}{h_0 e_1 i_1 - h_0 f_1 h_1} d_1 + \frac{b_0 e_1 g_1 i_1}{h_0 e_1 i_1 - h_0 f_1 h_1}, \\
    b_1^3 &= \frac{b_1^2}{h_0 i_1^3}  \bigl(3 b_0 h_1 i_1^3 - \frac{1}{4} h_0 e_1 f_1^2 i_1 + \frac{1}{4} h_0 f_1^3 h_1\bigr) \\
    &\quad - \frac{b_1}{h_0^2 i_1^2} \bigl(3 b_0^2 h_1^2 i_1^2 - \frac{1}{2} b_0 h_0 e_1^2 f_1 i_1 + \frac{1}{2} b_0 h_0 e_1 f_1^2 h_1 - \frac{1}{4} h_0^2 e_1^2 i_1^2 g_2 + \frac{1}{2} h_0^2 e_1 f_1 h_1 i_1 g_2 - \frac{1}{4} h_0^2 f_1^2 h_1^2 g_2\bigr) \\
    &\quad -\,\frac{1}{h_0^3 i_1^3}
    \bigl(-b_0^3 h_1^3 i_1^3 + \frac{1}{4} b_0^2 h_0 e_1^3 i_1^3 - \frac{1}{4} b_0^2 h_0 e_1^2 f_1 h_1 i_1^2 + \frac{1}{4} b_0 h_0^2 e_1^2 h_1 i_1^3 g_2 - \frac{1}{2} b_0 h_0^2 e_1 f_1 h_1^2 i_1^2 g_2 \\
    &\qquad\qquad + \frac{1}{4} b_0 h_0^2 f_1^2 h_1^3 i_1 g_2 + \frac{1}{4} h_0^3 e_1^3 i_1^3 g_3 - \frac{3}{4} h_0^3 e_1^2 f_1 h_1 i_1^2 g_3 + \frac{3}{4} h_0^3 e_1 f_1^2 h_1^2 i_1 g_3 - \frac{1}{4} h_0^3 f_1^3 h_1^3 g_3\bigr), \\
    c_1 &= \frac{b_0 i_1}{h_0}, \\
    d_1^3 &= -\frac{d_1^2}{4i_1^3}   \bigl(e_1 h_1^2 i_1 - 12 f_1 g_1 i_1^2 - \frac{1}{4} f_1 h_1^3\bigr)\\
    &\quad - \frac{d_1}{i_1^3} \bigl(-\frac{1}{2} e_1^2 g_1 h_1 i_1 - \frac{1}{4} e_1^2 i_1^3 G_2 + \frac{1}{2} e_1 f_1 g_1 h_1^2 + \frac{1}{2} e_1 f_1 h_1 i_1^2 G_2 + 3 f_1^2 g_1^2 i_1 - \frac{1}{4} f_1^2 h_1^2 i_1 G_2\bigr) \\
    &\quad - \frac{1}{i_1^3}\bigl(\frac{1}{4} e_1^3 g_1^2 i_1 + \frac{1}{4} e_1^3 i_1^3 G_3 - \frac{1}{4} e_1^2 f_1 g_1^2 h_1 + \frac{1}{4} e_1^2 f_1 g_1 i_1^2 G_2 - \frac{3}{4} e_1^2 f_1 h_1 i_1^2 G_3 - \frac{1}{2} e_1 f_1^2 g_1 h_1 i_1 G_2 \\
    &\qquad\qquad + \frac{3}{4} e_1 f_1^2 h_1^2 i_1 G_3 - f_1^3 g_1^3 + \frac{1}{4} f_1^3 g_1 h_1^2 G_2 - \frac{1}{4} f_1^3 h_1^3 G_3\bigr).
\end{align*}

\subsubsection{$e_0 i_1 - h_0 f_1=0$, $e_1 i_1 - f_1 h_1=0$, 
$h_0 i_1 - i_0 h_1=0$, $f_0 i_1 - i_0 f_1=0$}
This is a degenerate branch resulting in a set of cross relations.
\begin{align*}
     f_0 h_0 - e_0 i_0 = 0, \quad
    & b_0 a_1 - a_0 b_1 = 0, \quad
    & c_0 a_1 - a_0 c_1 = 0, \\
     d_0 a_1 - a_0 d_1 = 0, \quad
    & e_0 a_1 - a_0 e_1 = 0, \quad
    & f_0 a_1 - a_0 f_1 = 0, \\
     g_0 a_1 - a_0 g_1 = 0, \quad
    & h_0 a_1 - a_0 h_1 = 0, \quad
    & i_0 a_1 - a_0 i_1 = 0, \\
     c_0 b_1 - b_0 c_1 = 0, \quad
    & d_0 b_1 - b_0 d_1 = 0, \quad
    & e_0 b_1 - b_0 e_1 = 0, \\
     f_0 b_1 - b_0 f_1 = 0, \quad
    & g_0 b_1 - b_0 g_1 = 0, \quad
    & h_0 b_1 - b_0 h_1 = 0, \\
     i_0 b_1 - b_0 i_1 = 0, \quad
    & d_0 c_1 - c_0 d_1 = 0, \quad
    & e_0 c_1 - c_0 e_1 = 0, \\
     f_0 c_1 - c_0 f_1 = 0, \quad
    & g_0 c_1 - c_0 g_1 = 0, \quad
    & h_0 c_1 - c_0 h_1 = 0, \\
     i_0 c_1 - c_0 i_1 = 0, \quad
    & e_0 d_1 - d_0 e_1 = 0, \quad
    & f_0 d_1 - d_0 f_1 = 0, \\
     g_0 d_1 - d_0 g_1 = 0, \quad
    & h_0 d_1 - d_0 h_1 = 0, \quad
    & i_0 d_1 - d_0 i_1 = 0, \\
     f_0 e_1 - e_0 f_1 = 0, \quad
    & g_0 e_1 - e_0 g_1 = 0, \quad
    & h_0 e_1 - e_0 h_1 = 0, \\
     i_0 e_1 - e_0 i_1 = 0, \quad
    & g_0 f_1 - f_0 g_1 = 0, \quad
    & h_0 f_1 - e_0 i_1 = 0, \\
     i_0 f_1 - f_0 i_1 = 0, \quad
    & h_0 g_1 - g_0 h_1 = 0, \quad
    & i_0 g_1 - g_0 i_1 = 0, \\
     f_0 h_1 - e_0 i_1 = 0, \quad
    & i_0 h_1 - h_0 i_1 = 0, \quad
    & f_1 h_1 - e_1 i_1 = 0.
\end{align*}
\subsection{Case $e_1 i_1 - f_1 h_1=0$}
\subsubsection{$e_1 i_1 - f_1 h_1=0$, $i_1 h_0 (h_0 i_1 - i_0 h_1) (e_0 i_1 - h_0 f_1)\not=0$}
\begin{align*}
    a_0 &= -\frac{(e_0 i_0 h_1 - f_0 h_0 h_1)}{e_0 h_0 i_1^2 - e_0 i_0 h_1 i_1 - h_0^2 f_1 i_1 + h_0 i_0 f_1 h_1} g_0 c_1 \\
    &\quad - \frac{-b_0 e_0 i_1^2 + b_0 f_0 h_1 i_1 + b_0 h_0 f_1 i_1 - b_0 i_0 f_1 h_1}{e_0 h_0 i_1^2 - e_0 i_0 h_1 i_1 - h_0^2 f_1 i_1 + h_0 i_0 f_1 h_1} g_0 \\
    &\quad - \frac{-e_0 h_0 i_0 g_1 + f_0 h_0^2 g_1}{e_0 h_0 i_1^2 - e_0 i_0 h_1 i_1 - h_0^2 f_1 i_1 + h_0 i_0 f_1 h_1} c_1 \\
    &\quad - \frac{b_0 e_0 i_0 g_1 i_1 - b_0 f_0 h_0 g_1 i_1}{e_0 h_0 i_1^2 - e_0 i_0 h_1 i_1 - h_0^2 f_1 i_1 + h_0 i_0 f_1 h_1}, \\
    c_0 &= \frac{e_0 i_0 - f_0 h_0}{e_0 i_1 - h_0 f_1} c_1 - \frac{-b_0 f_0 i_1 + b_0 i_0 f_1}{e_0 i_1 - h_0 f_1}, \\
    d_0 &= \frac{e_0 i_1 - f_0 h_1}{h_0 i_1 - i_0 h_1} g_0 - \frac{e_0 i_0 g_1 - f_0 h_0 g_1}{h_0 i_1 - i_0 h_1}, \\
    g_0^3 &= -\frac{g_0^2}{4i_1^3}  \bigl(h_0 h_1^2 i_1 - 12 i_0 g_1 i_1^2 - i_0 h_1^3\bigr) \\
    &\quad - \frac{g_0}{i_1^3}\bigl(-\frac{1}{2} h_0^2 g_1 h_1 i_1 - \frac{1}{4} h_0^2 i_1^3 G_2 + \frac{1}{2} h_0 i_0 g_1 h_1^2 + \frac{1}{2} h_0 i_0 h_1 i_1^2 G_2 + 3 i_0^2 g_1^2 i_1 - \frac{1}{4} i_0^2 h_1^2 i_1 G_2
    \bigr) \\
    &\quad - \frac{1}{i_1^3}\bigl(\frac{1}{4} h_0^3 g_1^2 i_1 + \frac{1}{4} h_0^3 i_1^3 G_3 - \frac{1}{4} h_0^2 i_0 g_1^2 h_1 + \frac{1}{4} h_0^2 i_0 g_1 i_1^2 G_2 - \frac{3}{4} h_0^2 i_0 h_1 i_1^2 G_3 - \frac{1}{2} h_0 i_0^2 g_1 h_1 i_1 G_2 \\
    &\qquad\qquad + \frac{3}{4} h_0 i_0^2 h_1^2 i_1 G_3 - i_0^3 g_1^3 + \frac{1}{4} i_0^3 g_1 h_1^2 G_2 - \frac{1}{4} i_0^3 h_1^3 G_3\bigr), \\
    a_1 &= \frac{g_1}{i_1} c_1, \quad 
    b_1 = \frac{h_1}{i_1} c_1, \\
    c_1^3 &= \frac{c_1^2}{h_0^3}\bigl(3 b_0 h_0^2 i_1 + \frac{1}{4} e_0^3 i_1 - \frac{1}{4} e_0^2 h_0 f_1\bigr) \\
    &\quad - \frac{c_1}{h_0^3}\bigl(3 b_0^2 h_0 i_1^2 + \frac{1}{2} b_0 e_0^2 f_1 i_1 - \frac{1}{2} b_0 e_0 h_0 f_1^2 - \frac{1}{4} e_0^2 h_0 i_1^2 g_2 + \frac{1}{2} e_0 h_0^2 f_1 i_1 g_2 - \frac{1}{4} h_0^3 f_1^2 g_2\bigr) \\
    &\quad - \frac{1}{h_0^3}\bigl(-b_0^3 i_1^3 - \frac{1}{4} b_0^2 e_0 f_1^2 i_1 + \frac{1}{4} b_0^2 h_0 f_1^3 + \frac{1}{4} b_0 e_0^2 i_1^3 g_2 - \frac{1}{2} b_0 e_0 h_0 f_1 i_1^2 g_2 + \frac{1}{4} b_0 h_0^2 f_1^2 i_1 g_2 \\
    &\qquad\qquad - \frac{1}{4} e_0^3 i_1^3 g_3 + \frac{3}{4} e_0^2 h_0 f_1 i_1^2 g_3 - \frac{3}{4} e_0 h_0^2 f_1^2 i_1 g_3 + \frac{1}{4} h_0^3 f_1^3 g_3\bigr), \\
    d_1 &= \frac{f_1 g_1}{i_1}, \quad
    e_1 = \frac{f_1 h_1}{i_1}.
\end{align*}
\subsubsection{$e_1 i_1 - f_1 h_1=0$, $h_0 i_1 - i_0 h_1=0$, $i_1 h_1 i_0 (e_0 i_1^2 - i_0 f_1 h_1)\not=0$}
\begin{align*}
    a_0 &= \frac{1}{e_0 i_1^2 - i_0 f_1 h_1}\bigl(i_0 h_1 d_0 c_1 + b_0 i_1^2 d_0 + e_0 i_0 g_1 c_1 - b_0 i_0 f_1 g_1\bigr), \\
    c_0 &= \frac{1}{e_0 i_1^2 - i_0 f_1 h_1}\bigl( (e_0 i_0 i_1 - f_0 i_0 h_1) c_1 - (-b_0 f_0 i_1^2 + b_0 i_0 f_1 i_1)\bigr), \\
    d_0^3 &= -\frac{d_0^2}{4i_1^3} \bigl(e_0 h_1^2 i_1 - 12 f_0 g_1 i_1^2 - f_0 h_1^3\bigr)  \\
    &\quad -\frac{d_0}{i_1^3} \bigl( -\frac{1}{2} e_0^2 g_1 h_1 i_1 - \frac{1}{4} e_0^2 i_1^3 G_2 + \frac{1}{2} e_0 f_0 g_1 h_1^2 + \frac{1}{2} e_0 f_0 h_1 i_1^2 G_2 + 3 f_0^2 g_1^2 i_1 - \frac{1}{4} f_0^2 h_1^2 i_1 G_2\bigr) \\
    &\quad -\frac{1}{i_1^3}\bigl(\frac{1}{4} e_0^3 g_1^2 i_1 + \frac{1}{4} e_0^3 i_1^3 G_3 - \frac{1}{4} e_0^2 f_0 g_1^2 h_1 + \frac{1}{4} e_0^2 f_0 g_1 i_1^2 G_2 - \frac{3}{4} e_0^2 f_0 h_1 i_1^2 G_3 \\
    &\qquad\qquad - \frac{1}{2} e_0 f_0^2 g_1 h_1 i_1 G_2 + \frac{3}{4} e_0 f_0^2 h_1^2 i_1 G_3 - f_0^3 g_1^3 + \frac{1}{4} f_0^3 g_1 h_1^2 G_2 - \frac{1}{4} f_0^3 h_1^3 G_3\bigr), \\
    g_0 &= \frac{i_0 g_1}{i_1}, \quad
    h_0 = \frac{i_0 h_1}{i_1}, \\
    a_1 &= \frac{g_1}{i_1} c_1, \quad
    b_1 = \frac{h_1}{i_1} c_1, \\
    c_1^3 &= \frac{c_1^2}{i_0^3 h_1^3} \bigl(3 b_0 i_0^2 h_1^2 i_1^2 + \frac{1}{4} e_0^3 i_1^4 - \frac{1}{4} e_0^2 i_0 f_1 h_1 i_1^2\bigr) \\
    &\quad - \frac{c_1}{i_0^3 h_1^3} \bigl(3 b_0^2 i_0 h_1 i_1^4 + \frac{1}{2} b_0 e_0^2 f_1 i_1^4 - \frac{1}{2} b_0 e_0 i_0 f_1^2 h_1 i_1^2 - \frac{1}{4} e_0^2 i_0 h_1 i_1^4 g_2 + \frac{1}{2} e_0 i_0^2 f_1 h_1^2 i_1^2 g_2 - \frac{1}{4} i_0^3 f_1^2 h_1^3 g_2
    \bigr) \\
    &\quad - \frac{1}{i_0^3 h_1^3} \bigl(-b_0^3 i_1^6 - \frac{1}{4} b_0^2 e_0 f_1^2 i_1^4 + \frac{1}{4} b_0^2 i_0 f_1^3 h_1 i_1^2 + \frac{1}{4} b_0 e_0^2 i_1^6 g_2 - \frac{1}{2} b_0 e_0 i_0 f_1 h_1 i_1^4 g_2\\
    &\qquad\qquad + \frac{1}{4} b_0 i_0^2 f_1^2 h_1^2 i_1^2 g_2 - \frac{1}{4} e_0^3 i_1^6 g_3 + \frac{3}{4} e_0^2 i_0 f_1 h_1 i_1^4 g_3 - \frac{3}{4} e_0 i_0^2 f_1^2 h_1^2 i_1^2 g_3 + \frac{1}{4} i_0^3 f_1^3 h_1^3 g_3\bigr), \\
    d_1 &= \frac{f_1 g_1}{i_1}, \quad 
    e_1 = \frac{f_1 h_1}{i_1}.
\end{align*}

\subsubsection{$e_1 i_1 - f_1 h_1=0$, $h_0 i_1 - i_0 h_1=0$, $h_1=0$, $i_1 h_0 \not=0$}
\begin{align*}
    a_0 &= \frac{b_0}{e_0} d_0 + \frac{i_0 g_1}{i_1^2} c_1 - \frac{b_0 i_0 f_1 g_1}{e_0 i_1^2}, \quad 
    c_0 = \frac{i_0}{i_1} c_1 - \frac{-b_0 f_0 i_1 + b_0 i_0 f_1}{e_0 i_1}, \\
    d_0^3 &= 3 \frac{f_0 g_1}{i_1} d_0^2 - \frac{-\frac{1}{4} e_0^2 i_1^2 G_2 + 3 f_0^2 g_1^2}{i_1^2} d_0 \\
    &\quad - \frac{\frac{1}{4} e_0^3 g_1^2 i_1 + \frac{1}{4} e_0^3 i_1^3 G_3 + \frac{1}{4} e_0^2 f_0 g_1 i_1^2 G_2 - f_0^3 g_1^3}{i_1^3}, \\
    g_0 &= \frac{i_0 g_1}{i_1}, \quad
    h_0=0, \quad
    a_1 = \frac{g_1}{i_1} c_1, \quad
    b_1=0, \\
    c_1^2 &= \frac{2 b_0 f_1}{e_0} c_1 - \frac{4 b_0^3 i_1^2 + b_0^2 e_0 f_1^2 - b_0 e_0^2 i_1^2 g_2 + e_0^3 i_1^2 g_3}{e_0^3}, \\
    d_1 &= \frac{f_1 g_1}{i_1}, \quad
    e_1=0, \quad
    h_1=0.
\end{align*}
\subsubsection{$e_1 i_1 - f_1 h_1=0$, $h_0 i_1 - i_0 h_1=0$, $i_0=0$, $i_1 e_0 \not=0$}
\begin{align*}
    a_0 &= \frac{b_0}{e_0} d_0, \quad 
    c_0 = \frac{b_0 f_0}{e_0}, \\
    d_0^3 &= -\frac{d_0^2}{4 i_1^3} \bigl(e_0 i_1 h_1^2 - 12 f_0 g_1 i_1^2 - f_0 h_1^3\bigr)  \\
    &\quad - \frac{d_0}{i_1^3} \bigl(-\frac{1}{2} e_0^2 g_1 i_1 h_1 - \frac{1}{4} e_0^2 i_1^3 G_2 + \frac{1}{2} e_0 f_0 g_1 h_1^2 + \frac{1}{2} e_0 f_0 i_1^2 h_1 G_2 + 3 f_0^2 g_1^2 i_1 - \frac{1}{4} f_0^2 i_1 h_1^2 G_2\bigr) \\
    &\quad - \frac{1}{i_1^3} \bigl(\frac{1}{4} e_0^3 g_1^2 i_1 + \frac{1}{4} e_0^3 i_1^3 G_3 - \frac{1}{4} e_0^2 f_0 g_1^2 h_1 + \frac{1}{4} e_0^2 f_0 g_1 i_1^2 G_2 - \frac{3}{4} e_0^2 f_0 i_1^2 h_1 G_3 - \frac{1}{2} e_0 f_0^2 g_1 i_1 h_1 G_2 \\
    &\qquad\qquad + \frac{3}{4} e_0 f_0^2 i_1 h_1^2 G_3 - f_0^3 g_1^3 + \frac{1}{4} f_0^3 g_1 h_1^2 G_2 - \frac{1}{4} f_0^3 h_1^3 G_3\bigr), \\
    g_0&=0, \quad 
    h_0, \\
    a_1 &= \frac{g_1}{i_1} c_1, \quad 
    b_1 = \frac{h_1}{i_1} c_1, \\
    c_1^2 &= \frac{2 b_0 f_1}{e_0} c_1 - \frac{4 b_0^3 i_1^2 + b_0^2 e_0 f_1^2 - b_0 e_0^2 i_1^2 g_2 + e_0^3 i_1^2 g_3}{e_0^3}, \\
    d_1 &= \frac{f_1 g_1}{i_1}, \quad
    e_1 = \frac{f_1 h_1}{i_1}.
\end{align*}
\subsubsection{$e_1 i_1 - f_1 h_1=0$, $h_0 i_1 - i_0 h_1=0$, $e_0 i_1^2 - i_0 f_1 h_1=0$, $i_1 h_1 i_0 (f_0 i_1 - i_0 f_1) \not=0$}
\begin{align*}
    a_0 &= \frac{1}{f_0 i_1 - i_0 f_1}\bigl(i_1 c_0 d_0 - i_0 f_1 g_1 c_0 - b_0 i_1^2 d_0 + b_0 f_0 g_1 i_1\bigr), \\
    c_0^3 &= \frac{c_0^2}{h_1 i_1^3}\bigl( 3 b_0 i_1^4 - \frac{1}{4} f_0 f_1^2 h_1 i_1 + \frac{1}{4} i_0 f_1^3 h_1\bigr) \\
    &\quad - \frac{c_0}{i_0 h_1^2 i_1^2}\bigl(3 b_0^2 i_0 i_1^4 - \frac{1}{2} b_0 f_0^2 f_1 h_1 i_1^2 + \frac{1}{2} b_0 f_0 i_0 f_1^2 h_1 i_1 - \frac{1}{4} f_0^2 i_0 h_1^2 i_1^2 g_2 + \frac{1}{2} f_0 i_0^2 f_1 h_1^2 i_1 g_2 - \frac{1}{4} i_0^3 f_1^2 h_1^2 g_2\bigr) \\
    &\quad - \frac{1}{i_0^2 h_1^3 i_1^3}\bigl(-b_0^3 i_0^2 i_1^6 + \frac{1}{4} b_0^2 f_0^3 h_1 i_1^5 - \frac{1}{4} b_0^2 f_0^2 i_0 f_1 h_1 i_1^4 + \frac{1}{4} b_0 f_0^2 i_0^2 h_1^2 i_1^4 g_2 - \frac{1}{2} b_0 f_0 i_0^3 f_1 h_1^2 i_1^3 g_2 \\
    &\qquad\qquad\quad + \frac{1}{4} b_0 i_0^4 f_1^2 h_1^2 i_1^2 g_2 + \frac{1}{4} f_0^3 i_0^2 h_1^3 i_1^3 g_3 - \frac{3}{4} f_0^2 i_0^3 f_1 h_1^3 i_1^2 g_3 + \frac{3}{4} f_0 i_0^4 f_1^2 h_1^3 i_1 g_3 - \frac{1}{4} i_0^5 f_1^3 h_1^3 g_3\bigr), \\
    d_0^3 &= -\frac{d_0^2}{i_1^4} \bigl(3 f_0 g_1 i_1^3 + \frac{1}{4} f_0 h_1^3 i_1 - \frac{1}{4} i_0 f_1 h_1^3\bigr) \\
    &\quad -\frac{d_0}{i_1^6} \bigl(3 f_0^2 g_1^2 i_1^4 - \frac{1}{4} f_0^2 h_1^2 i_1^4 G_2 + \frac{1}{2} f_0 i_0 f_1 g_1 h_1^3 i_1 + \frac{1}{2} f_0 i_0 f_1 h_1^2 i_1^3 G_2 - \frac{1}{2} i_0^2 f_1^2 g_1 h_1^3 - \frac{1}{4} i_0^2 f_1^2 h_1^2 i_1^2 G_2\bigr)\\
    &\quad - \frac{1}{i_1^8} \bigl(-f_0^3 g_1^3 i_1^5 + \frac{1}{4} f_0^3 g_1 h_1^2 i_1^5 G_2 - \frac{1}{4} f_0^3 h_1^3 i_1^5 G_3 - \frac{1}{2} f_0^2 i_0 f_1 g_1 h_1^2 i_1^4 G_2 + \frac{3}{4} f_0^2 i_0 f_1 h_1^3 i_1^4 G_3 \\
    &\qquad\qquad\quad - \frac{1}{4} f_0 i_0^2 f_1^2 g_1^2 h_1^3 i_1 + \frac{1}{4} f_0 i_0^2 f_1^2 g_1 h_1^2 i_1^3 G_2 - \frac{3}{4} f_0 i_0^2 f_1^2 h_1^3 i_1^3 G_3 + \frac{1}{4} i_0^3 f_1^3 g_1^2 h_1^3 + \frac{1}{4} i_0^3 f_1^3 h_1^3 i_1^2 G_3\bigr), \\
    e_0 &= \frac{i_0 f_1 h_1}{i_1^2}, \quad
    g_0 = \frac{i_0 g_1}{i_1}, \quad
    h_0 = \frac{i_0 h_1}{i_1}, \quad
    a_1 = \frac{b_0 g_1 i_1}{i_0 h_1}, \\
    b_1 &= \frac{b_0 i_1}{i_0}, \quad
    c_1 = \frac{b_0 i_1^2}{i_0 h_1}, \quad
    d_1 = \frac{f_1 g_1}{i_1}, \quad
    e_1 = \frac{f_1 h_1}{i_1}.
\end{align*}

\end{document}